\documentclass{article}

\usepackage[top=2cm, bottom=2cm, left=2cm, right=2cm]{geometry}
\usepackage{graphicx}
\usepackage{algorithm}
\usepackage{algorithmic}
\usepackage{amsmath, amssymb}
\usepackage{amsthm}
\usepackage{amsfonts}
\usepackage{authblk}
\usepackage{cite}

\title{A Note on Enumeration by Fair Sampling}
\author[1,2,3,4]{Yuta Mizuno \thanks{mizuno@es.hokudai.ac.jp}}
\author[1,2,3]{Tamiki Komatsuzaki}
\affil[1]{Research Institute for Electronic Science, Hokkaido University}
\affil[2]{Graduate School of Chemical Sciences and Engineering, Hokkaido University}
\affil[3]{Institute for Chemical Reaction Design and Discovery (WPI-ICReDD), Hokkaido University}
\affil[4]{PRESTO, Japan Science and Technology Agency}
\setcounter{Maxaffil}{0}

\newtheorem{problem}{Problem}
\newtheorem{lemma}{Lemma}
\newtheorem{theorem}{Theorem}

\begin{document}
    \maketitle

    \begin{abstract}
        This note describes an algorithm for enumerating
        all the elements in a finite set
        based on uniformly random sampling from the set.
        This algorithm can be used for
        enumeration by fair sampling with quantum annealing.
        Our algorithm is based on a lemma of the coupon collector's problem
        and is an improved version of the algorithm described in
        \textit{arXiv}:2007.08487 (2020).
        We provide a mathematical analysis and a numerical demonstration
        of our algorithm. 
    \end{abstract}

\section{Enumeration by fair sampling}
The objective of the algorithm described in this note is
to enumerate all the elements of a finite set
by repeating uniformly random sampling from the set.
This is mathematically formulated as follows.

\begin{problem} \label{problem}
    Suppose there exists a sampler that samples an element
    of a finite set $X$ with equal probability, that is,
    the probability that the sampler returns $x \in X$ is $1/|X|$, 
    where $|X|$ is the number of all the elements of $X$ and is unknown.
    Then, using the sampler repeatedly, enumerate all the elements of $X$
    with success probability greater than or equal to $1-\epsilon$,
    where $\epsilon \in (0,1]$ is a failure tolerance.
\end{problem}

This kind of problem appears in the application of quantum annealing
to enumeration problems \cite{Kumar2020}.
Quantum annealing is a procedure to find ground states of
an Ising Hamiltonian.
By designing a Hamiltonian so that
the ground state(s) of the spin configuration
corresponds to desired combinations,
such as cost minimum or constraint satisfactory combinations,
one can sample them by quantum annealing.
If the Hamiltonian system has multiple ground states
corresponding to elements of a finite set,
one can take advantage of the sampling ability of quantum annealing
to enumerate all the elements of the set.  

Although current versions of quantum annealing
do not always sample ground states with equal probability \cite{Konz2019}
and the research on
how to achieve the fair sampling in quantum annealing
is still one of the most interesting subjects on quantum annealing
\cite{Kumar2020, Konz2019, Yamamoto2020},
as the first step toward
the application of quantum annealing to enumeration problems,
we address a framework for enumeration by fair sampling
assuming that we have a fair sampler.
The same topic was addressed in the appendix of Ref. \cite{Kumar2020},
in which an algorithm for enumeration by fair sampling
was proposed based on a well-known lemma in the coupon collector's problem.
The coupon collector's problem is a classic problem in probability theory,
which is described as follows:
If there are $n$ different types of coupons and
we can get one of the types of coupons with equal probability at each trial,
how many trial do we need to collect all $n$ types of coupons?
The evaluation of the probability distribution of the number of trials
needed to collect all types of coupons
derives a termination condition of samplings in the enumeration.
In this note,
we derive an improved version of the algorithm in Ref. \cite{Kumar2020}
based on an extension of the lemma in the coupon collector's problem
used in Ref. \cite{Kumar2020}.

\section{Enumeration algorithm derived from the coupon collector’s problem}
\subsection{Algorithm description}
An algorithm for Problem \ref{problem} can be derived
based on the following lemma
(See Sec. \ref{subsec:proof-lemma1} for the proof).
\begin{lemma} \label{lemma:1}
    Let $T_m$ be the random variable representing
    the number of samplings necessary to collect $m$ different
    elements of $X$.
    If the true value of the number of all the elements of $X$ is $n$,
    for any positive integer $m\,(\le n)$ and
    any positive real number $\epsilon\,(\le 1/\mathrm{e})$,
    the tail distribution of $T_m$ is bounded from above as
    \begin{equation}
        \mathrm{P}\left(
            T_m > \left\lceil m\log\frac{m}{\epsilon} \right\rceil
            \,\middle|\, |X|=n
            \right)
            \le \epsilon.
    \end{equation}
\end{lemma}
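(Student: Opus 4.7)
The plan is to reduce the general case $|X|=n$ (with $n \ge m$) to the classical coupon-collector problem on $m$ equiprobable coupons and then apply an elementary union bound. First, I would decompose $T_m = \tau_1+\tau_2+\cdots+\tau_m$, where $\tau_i$ is the number of samples drawn between the collection of the $(i-1)$-th and the $i$-th distinct element. Conditional on $|X|=n$, the $\tau_i$ are mutually independent and $\tau_i$ is geometric with success probability $p_i^{(n)} = (n-i+1)/n$, since once $i-1$ distinct elements have been collected, the next sample is new with probability equal to the fraction of still-unseen elements.

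The key step is the following reduction. Since $p_i^{(n)} = 1-(i-1)/n \ge 1-(i-1)/m = p_i^{(m)}$ whenever $n\ge m$, each $\tau_i$ is stochastically dominated by a geometric random variable $\tau_i'$ with parameter $p_i^{(m)}$. A clean way to see this is by coupling: draw $U_1,U_2,\ldots\sim\mathrm{Uniform}(0,1]$ i.i.d.\ and set $\tau_i=\inf\{k:U_k\le p_i^{(n)}\}$, $\tau_i'=\inf\{k:U_k\le p_i^{(m)}\}$; then $\tau_i\le \tau_i'$ pointwise. Summing, $T_m \le T_m'$ almost surely in this coupling, where $T_m'=\sum_i \tau_i'$ is the completion time of the standard coupon-collector process on $m$ equiprobable coupons. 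For $T_m'$ the classical union bound over coupons gives
\begin{equation*}
    \mathrm{P}(T_m'>t)\le m(1-1/m)^t \le m\,\mathrm{e}^{-t/m},
\end{equation*}
and substituting $t=\lceil m\log(m/\epsilon)\rceil \ge m\log(m/\epsilon)$ makes the right-hand side at most $m\cdot(\epsilon/m)=\epsilon$.

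The step I expect to require the most thought is the reduction to the $n=m$ case. A naive Markov-type bound applied to the number of unseen elements produces a prefactor $n/(n-m+1)$ that refuses to vanish as $n\to\infty$, so it cannot deliver the $n$-independent estimate $\epsilon$. The coupling argument captures the intuition that a larger $X$ can only make collecting $m$ distinct elements easier, so the worst case is $n=m$, where the familiar coupon-collector bound applies directly. The hypothesis $\epsilon\le 1/\mathrm{e}$ is mild; it mainly serves to ensure $\lceil m\log(m/\epsilon)\rceil \ge m$, consistent with the deterministic lower bound $T_m\ge m$.
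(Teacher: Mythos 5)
Your proof is correct, and the key step is handled by a genuinely different argument from the paper's. Both proofs decompose $T_m$ into independent geometric increments and both reduce the case $|X|=n$ to the case $|X|=m$, where the classical union bound over $m$ equiprobable coupons (the paper's Lemma 2) finishes the job. The difference is in how the reduction is justified. The paper writes the tail probability as an explicit sum involving $g_\tau(x)=x^{-\tau}\prod_{i=1}^m[x-(i-1)]$ and proves by differentiation that $g_\tau$ is decreasing on $x\ge m$ \emph{provided} $\tau \ge m\log m + m$; this is precisely where the hypothesis $\epsilon\le 1/\mathrm{e}$ enters, and the resulting comparison $\mathrm{P}(T_m>\bar\tau\mid |X|=n)\le \mathrm{P}(T_m>\bar\tau\mid |X|=m)$ is only established for large enough thresholds. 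Your stochastic-dominance coupling (using that $p_i^{(n)}=1-(i-1)/n \ge 1-(i-1)/m=p_i^{(m)}$, so each geometric increment is pointwise smaller under the coupling) gives the same comparison for \emph{every} threshold, with no condition on $\epsilon$ at all; combined with Lemma 2, which also holds for all $\epsilon>0$, your argument actually proves the lemma without the hypothesis $\epsilon\le 1/\mathrm{e}$, which is strictly stronger than the stated result and cleanly isolates the intuition that a larger ambient set only makes collecting $m$ distinct elements faster. One small point of hygiene: in the coupling you should use an independent uniform sequence for each index $i$ (rather than a single shared sequence), so that the vectors $(\tau_1,\dots,\tau_m)$ and $(\tau_1',\dots,\tau_m')$ each retain the correct joint law of independent geometrics while satisfying $\tau_i\le\tau_i'$ coordinatewise; as written this is implicit but worth making explicit.
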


This implies that, if the number of collected elements
after $\lceil m\log(m/\epsilon) \rceil$ samplings is less than $m$,
we can judge that no more different element exists in $X$
with failure probability less than or equal to $\epsilon$
\footnote{
    See the proof of Theorem \ref{theorem} below for the detailed discussion.
}.
In other words, we can use the condition
$T_m > \lceil m\log(m/\epsilon) \rceil$
as a termination condition of samplings.

\begin{algorithm}[t]
    \caption{Enumeration by Uniformly Random Sampling}
    \begin{algorithmic}[1] \label{algorithm}
        \renewcommand{\algorithmicrequire}{\textbf{Input:}}
        \renewcommand{\algorithmicensure}{\textbf{Output:}}
        \REQUIRE Sampling method $Sample()$,
                 failure tolerance $\epsilon \in (0,\frac{1}{\mathrm{e}}]$,\\
        \qquad checkpoints
               $\mathcal{C} = [m_1, \cdots, m_M] \in \mathbb{N}^M$
               s.t. $m_i<m_j$ for any $i<j$ and $m_M \ge |X|$
        \ENSURE Collection of samples $S$
        \STATE $S \leftarrow \{\}$
        \STATE $t \leftarrow 0$
        \FOR{$i=1, \cdots, M$}
            \WHILE{$t \le \lceil m_i\log(m_i M/\epsilon) \rceil$}
                \STATE $s \leftarrow Sample()$
                \STATE $S \leftarrow S\cup\{s\}$
                \STATE $t \leftarrow t+1$
            \ENDWHILE
            \IF{$|S|<m_i$}
                \STATE \textbf{break}
            \ENDIF
        \ENDFOR
        \RETURN $S$
    \end{algorithmic}
\end{algorithm}

Algorithm \ref{algorithm} is an algorithm based on Lemma \ref{lemma:1}.
In this algorithm, there are $M$ checkpoints
for checking the termination condition
$T_{m_i} > \lceil m_i\log(m_i M/\epsilon) \rceil$.
The maximum integer of the checkpoints, $m_M$,
should be larger than the unknown value of $|X|$
in order to ensure success in the enumeration.
Thus, one needs to estimate a (rough) upper bound of $|X|$
before applying the algorithm.
Note that the number of checkpoints, $M$, is contained in
the logarithm of the termination conditions
so that the failure probability of the judgement at each checkpoint
is less than or equal to $\epsilon/M$ rather than $\epsilon$.
This ensures that the total failure probability is
less than or equal to $\epsilon$.

The success probability and the number of times of sampling
for Algorithm \ref{algorithm} to collect all the elements of $X$ 
is formally stated as follows.
\begin{theorem} \label{theorem}
    Algorithm \ref{algorithm} outputs the collection of all the elements
    of $X$ with success probability greater than or equal to $1-\epsilon$,
    if the input satisfies the requirements
    $\epsilon \in (0,1/\mathrm{e}]$,
    $m_i<m_j$ for any $1 \le i < j \le M$,
    and $m_M \ge |X|$.
    In success cases,
    the number of times of sampling the algorithm calls is
    $\lceil m_k\log(m_k M/\epsilon) \rceil$, where
    index $k \ (1 \le k \le M)$ satisfies $m_{k-1} < |X| \le m_k$
    (let $m_{k-1}$ be $0$ for $k=1$).
\end{theorem}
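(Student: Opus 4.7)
My plan is to reduce Theorem \ref{theorem} to a union bound over the Lemma \ref{lemma:1} tail estimates applied at each checkpoint, and then read the sample count directly off the while-loop termination condition. Fix the realization $|X|=n$ and let $k \in \{1,\dots,M\}$ be the unique index with $m_{k-1} < n \le m_k$ (setting $m_0 := 0$). For $i \in \{1,\dots,k\}$, define $\mu_i := \min(m_i,n)$, so that $\mu_i = m_i$ for $i<k$ and $\mu_k = n$. Because $m \mapsto m\log(mM/\epsilon)$ is increasing in $m$ and $m_k \ge n$, the budget the algorithm actually enforces at checkpoint $k$ dominates the bound $\lceil n\log(nM/\epsilon)\rceil$ that Lemma \ref{lemma:1} requires to conclude $|S|=n$.

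Next, I would apply Lemma \ref{lemma:1} at each $i \in \{1,\dots,k\}$ with failure tolerance $\epsilon/M$ and integer $\mu_i$; the hypothesis $\epsilon/M \le 1/\mathrm{e}$ is automatic from $\epsilon \le 1/\mathrm{e}$ and $M \ge 1$. For each $i$ this yields
\begin{equation*}
    \mathrm{P}\!\left(T_{\mu_i} > \lceil \mu_i\log(\mu_i M/\epsilon)\rceil \,\middle|\, |X|=n\right) \le \frac{\epsilon}{M},
\end{equation*}
so a union bound over the $k \le M$ relevant checkpoints produces a total failure probability of at most $k\epsilon/M \le \epsilon$. On the complementary good event, at each checkpoint $i<k$ we have $|S| \ge m_i$, so the break condition fails and the algorithm continues; at checkpoint $k$ we additionally have $|S| \ge n$, forcing $S=X$. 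The algorithm then returns $S=X$, either by triggering the break at checkpoint $k$ (when $n<m_k$) or at checkpoint $k+1$ (when $n=m_k$, since then $|S|=n<m_{k+1}$, with the analogous remark when $k=M$). This proves the probability bound, and the sample count in the generic case $n<m_k$ is precisely the threshold $\lceil m_k\log(m_k M/\epsilon)\rceil$ at which the while loop at checkpoint $k$ exits.

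I expect the main subtlety to be the dual role played by the integer $m$ in Lemma \ref{lemma:1}: at checkpoints $i<k$ we naturally take $m=m_i$, but at the terminal checkpoint we must take $m=n$, since $|S|=m_k$ is impossible when $n<m_k$ and so the bound $T_{m_k}$ carries no information about whether $S=X$. The monotonicity of $m\log(mM/\epsilon)$ is what lets the lemma's bound for $\mu_k=n$ be applied inside the larger budget $\lceil m_k\log(m_k M/\epsilon)\rceil$ that the algorithm actually runs to; everything else is routine bookkeeping.
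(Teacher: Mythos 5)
Your proposal is correct and follows essentially the same route as the paper's own proof: a union bound over the first $k$ checkpoints, applying Lemma \ref{lemma:1} with tolerance $\epsilon/M$ and with $m=m_i$ for $i<k$ but $m=n$ at the terminal checkpoint, using the monotonicity of $m\mapsto m\log(mM/\epsilon)$ to fit the lemma's threshold for $n$ inside the algorithm's budget $\lceil m_k\log(m_kM/\epsilon)\rceil$. You are in fact slightly more careful than the paper on the edge case $n=m_k<m_M$, where the break is deferred to checkpoint $k+1$ and the stated sample count $\lceil m_k\log(m_kM/\epsilon)\rceil$ does not literally hold---a point the paper's proof glosses over.
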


\begin{proof}
    Let $k$ be the index of the checkpoint that satisfies
    $m_{k-1} < |X| \le m_k$, which always exists if $|X| \le m_M$.
    In the cases of success in the enumeration,
    the algorithm does not break out of the loop
    at all checkpoints before the $k$-th checkpoint,
    and collects all the elements of $X$
    before breaking out at the $k$-th checkpoint.
    Because the number of samplings
    the algorithm has done up to the $k$-th checkpoint is
    $\lceil m_k\log(m_k M/\epsilon) \rceil$,
    the second statement of the theorem is true.
    
    To prove the first statement,
    we evaluate the success probability under the condition $|X|=n$.
    From the above discussion, we can express the success probability as
    \[
        \mathrm{P}(\mathrm{success}) = 
        \mathrm{P}\left(
            \bigwedge_{i=1}^{k-1}
            \left\{
                T_{m_i} \le
                \left\lceil m_i\log\frac{m_iM}{\epsilon} \right\rceil
            \right\}
            \land
            \left\{
                T_n \le
                \left\lceil m_k\log\frac{m_kM}{\epsilon} \right\rceil
            \right\}
            \,\middle|\,
            |X|=n
        \right).
    \]
    Taking the negation, the failure probability can be expressed as
    \[
        \mathrm{P}(\mathrm{failure}) =
        \mathrm{P}\left(
            \bigvee_{i=1}^{k-1}
            \left\{
                T_{m_i} >
                \left\lceil m_i\log\frac{m_iM}{\epsilon} \right\rceil
            \right\}
            \lor
            \left\{
                T_n >
                \left\lceil m_k\log\frac{m_kM}{\epsilon} \right\rceil
            \right\}
            \,\middle|\,
            |X|=n
        \right).
    \]
    Due to the subadditivity of probabilities, we obtain
    \[
        \mathrm{P}(\mathrm{failure}) \le
        \sum_{i=1}^{k-1}
        \mathrm{P}\left(
            T_{m_i} >
            \left\lceil m_i\log\frac{m_iM}{\epsilon} \right\rceil
        \,\middle|\,
        |X|=n
        \right)
        +
        \mathrm{P}\left(
            T_n > \left\lceil m_k\log\frac{m_kM}{\epsilon} \right\rceil
            \,\middle|\,
            |X|=n
        \right). 
    \]
    Since $m_k \ge n$,  we get
    \begin{eqnarray*}
        &&
        \mathrm{P}\left(
            T_n > \left\lceil n\log\frac{nM}{\epsilon} \right\rceil
            \,\middle|\,
            |X|=n
        \right)\\
        &=&
        \mathrm{P}\left(
            \left\{
                T_n > \left\lceil m_k\log\frac{m_kM}{\epsilon} \right\rceil
            \right\}
            \lor
            \left\{
                \left\lceil m_k\log\frac{m_kM}{\epsilon}\right\rceil \ge T_n
                > \left\lceil n\log\frac{nM}{\epsilon} \right\rceil
            \right\}
            \,\middle|\,
            |X|=n
        \right) \\
        &=&
        \mathrm{P}\left(
            T_n > \left\lceil m_k\log\frac{m_kM}{\epsilon} \right\rceil
            \,\middle|\,
            |X|=n
        \right)
        +
        \mathrm{P}\left(
            \left\lceil m_k\log\frac{m_kM}{\epsilon}\right\rceil \ge T_n
            > \left\lceil n\log\frac{nM}{\epsilon} \right\rceil
            \,\middle|\,
            |X|=n
        \right) \\
        &\ge&
        \mathrm{P}\left(
            T_n > \left\lceil m_k\log\frac{m_kM}{\epsilon} \right\rceil
            \,\middle|\,
            |X|=n
        \right),
    \end{eqnarray*}
    which gives
    \[
        \mathrm{P}(\mathrm{failure}) \le
        \sum_{i=1}^{k-1}
        \mathrm{P}\left(
            T_{m_i} >
            \left\lceil m_i\log\frac{m_iM}{\epsilon} \right\rceil
        \,\middle|\,
        |X|=n
        \right)
        +
        \mathrm{P}\left(
            T_n > \left\lceil n\log\frac{nM}{\epsilon} \right\rceil
            \,\middle|\,
            |X|=n
        \right).
    \]
    According to Lemma \ref{lemma:1},
    all of $k$ terms in the right hand side of the above inequality are
    less than or equal to $\epsilon/M$.
    Therefore, the failure probability is
    less than or equal to $\epsilon$,
    under the condition $|X|=n$. 
    Because the above discussion is valid for any $n$,
    the first statement of the theorem is proven.
\end{proof}

The number of samplings required in our algorithm is
less than that in the algorithm described in Ref. \cite{Kumar2020}.
The algorithm described in Ref. \cite{Kumar2020} is based on
the special case of Lemma \ref{lemma:1} in which $m$ equals to $n$.
This special case of Lemma \ref{lemma:1} is
a well-known lemma in the coupon collector's problem.
Because the algorithm described in Ref. \cite{Kumar2020}
resets the sampling counter $t$ to 0
when $|S|$ becomes greater than $m_i$ for every $i \in \{1, \cdots, M\}$,
the algorithm requires additional samplings compared with our algorithm.
The expectation value of the number of the additional samplings 
equals to the expectation value of $T_{m_{k-1}+1}$ given $|X|=n$
\footnote{
    The expectation value of $T_{m_{k-1}+1}$ given $|X|=n$ is
    greater than $n\log[(n+1)/(n-m_{k-1}+2)]$.
    The proof is as follows.
    Let $t_i$ be the random variable representing
    the number of times of sampling
    necessary to obtain a new (uncollected) element
    after $i-1$ elements are collected.
    The probability distribution of $t_i$ is
    the geometric distribution with expectation $n/(n-i+1)$.
    Thus, for $m \le n$, we get
    \begin{eqnarray*}
        \mathrm{E}\left(T_{m}\,\middle|\,|X|=n\right)
        \ =\ 
        \mathrm{E}\left(\sum_{i=1}^{m}t_i\,\middle|\,|X|=n\right) 
        \ =\ 
        \sum_{i=1}^{m}\mathrm{E}\left(t_i\,\middle|\,|X|=n\right) 
        \ =\ 
        n\sum_{k=n-m+1}^n \frac{1}{k} 
        \ >\ 
        n\int_{n-m+1}^{n+1} \frac{\mathrm{d}x}{x} 
        \ =\ 
        n \log \frac{n+1}{n-m+1}.
    \end{eqnarray*}
    Substituting $m_{k-1}+1$ for $m$ completes the proof.
},
because the final reset of $t$ occurs
when $|S|$ becomes greater than $m_{k-1}$.
Our algorithm can save the additional samplings
thanks to the more general statement of Lemma \ref{lemma:1}.

\subsection{Proof of Lemma \ref{lemma:1}} \label{subsec:proof-lemma1}
In this subsection, we prove Lemma \ref{lemma:1}.
Before giving a proof of Lemma \ref{lemma:1},
we prove the special case for $m=n$:

\begin{lemma} \label{lemma:2}
    Suppose $X$ is a finite set with size $n$.
    Let $T$ be the random variable representing
    the number of times of samplings
    necessary to collect all $n$ different elements of $X$.
    Then, for any positive real number $\epsilon$,
    the tail distribution of $T$ is bounded from above as
    \begin{equation}
        \mathrm{P}\left(
            T > \left\lceil n\log\frac{n}{\epsilon} \right\rceil
            \,\middle|\,
            |X|=n
        \right)
        \le \epsilon.
    \end{equation}
\end{lemma}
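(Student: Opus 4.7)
The plan is to run the standard coupon collector union-bound argument, which is clean because we only need a tail bound and not the exact distribution of $T$. First I would fix $t = \lceil n\log(n/\epsilon)\rceil$ and, for each $i \in \{1,\dots,n\}$, introduce the event $A_i$ that element $x_i \in X$ has not been sampled in the first $t$ draws. The key structural observation is that $\{T > t\}$ is precisely the event that at least one element is still missing after $t$ samplings, i.e.\ $\{T > t\} = \bigcup_{i=1}^n A_i$.

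Next I would compute $\mathrm{P}(A_i)$ directly. Because the sampler is uniform and the draws are independent, the probability that a single draw misses $x_i$ is $1 - 1/n$, so $\mathrm{P}(A_i) = (1 - 1/n)^t$. Applying the union bound then yields
\[
    \mathrm{P}(T > t \mid |X|=n) \;\le\; \sum_{i=1}^n \mathrm{P}(A_i) \;=\; n\left(1 - \tfrac{1}{n}\right)^t.
\]

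To finish, I would use the elementary inequality $1 - 1/n \le \mathrm{e}^{-1/n}$, giving $n(1-1/n)^t \le n\,\mathrm{e}^{-t/n}$. Since $t \ge n\log(n/\epsilon)$ by the definition of the ceiling, we have $\mathrm{e}^{-t/n} \le \mathrm{e}^{-\log(n/\epsilon)} = \epsilon/n$, and the bound collapses to $\epsilon$, as required.

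I do not expect a real obstacle here; the argument is the textbook coupon-collector calculation. The only mild bookkeeping point is handling the ceiling $\lceil n\log(n/\epsilon)\rceil$, which is harmless because rounding up only strengthens the exponential decay estimate. Note also that no restriction $\epsilon \le 1/\mathrm{e}$ is needed for this special case, matching the statement of Lemma \ref{lemma:2}; that restriction will only enter when we bootstrap to general $m$ in Lemma \ref{lemma:1}.
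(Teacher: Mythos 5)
Your proof is correct and is essentially identical to the paper's own argument: both use the union bound over the events that each individual element is missed in the first $\lceil n\log(n/\epsilon)\rceil$ draws, together with $1-1/n \le \mathrm{e}^{-1/n}$. Your closing remark that no restriction $\epsilon \le 1/\mathrm{e}$ is needed here is also consistent with the paper, which states Lemma \ref{lemma:2} for any positive $\epsilon$.
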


\begin{proof}
    Let $S_\tau$ be the set of elements
    that have been already collected until time $\tau$.
    The probability that
    an element $x \in X$ has not been sampled yet up to the moment $\tau$ is
    \[
        \mathrm{P}\left(
            x \notin S_\tau
            \,\middle|\,
            |X|=n
        \right)
        =
        \left(1-\frac{1}{n}\right)^\tau
        \le
        \mathrm{e}^{-\frac{\tau}{n}}.
    \]
    Thus, the probability that $T > \tau$ can be evaluated as
    \begin{eqnarray*}
        \mathrm{P}\left(
            T > \tau
            \,\middle|\,
            |X|=n
        \right)
        &=&
        \mathrm{P}\left(
            \bigvee_{x \in X}
            \left\{
                x \notin S_\tau
            \right\}
            \,\middle|\,
            |X|=n
        \right) \\
        &\le&
        \sum_{x \in X}
        \mathrm{P}\left(
            x \notin S_\tau
            \,\middle|\,
            |X|=n
        \right) \\
        &\le&
        n\mathrm{e}^{-\frac{\tau}{n}}.
    \end{eqnarray*}
    Substituting $\tau = \lceil n\log(n/\epsilon) \rceil$, we obtain
    \[
        \mathrm{P}\left(
            T > \left\lceil n\log\frac{n}{\epsilon} \right\rceil
            \,\middle|\,
            |X|=n
        \right)
        \le \epsilon.
    \]
\end{proof}

Now, let us prove Lemma \ref{lemma:1}
as an extension of Lemma \ref{lemma:2}.

\begin{proof}
    Let $t_i$ be the random variable representing
    the number of times of sampling
    necessary to obtain a new (uncollected) element
    after $i-1$ elements are collected.
    This can be expressed as
    \[
        t_i = T_i - T_{i-1}.
    \]
    In the case that $t_i=l$ under the condition $|X|=n$,
    after $i-1$ elements are collected,
    the sampler returns
    any of the $i-1$ collected elements until the $(l-1)$-th trial and
    returns one of the $n-(i-1)$ uncollected elements at the $l$-th trial.
    Thus, the probability distribution of $t_i$ is
    the geometric distribution
    \[
        \mathrm{P}\left(
            t_i = l
            \,\middle|\,
            |X|=n
        \right)
        =
        \frac{n-(i-1)}{n}
        \left(\frac{i-1}{n}\right)^{l-1}.  
    \]
    Here, note that $t_1,\cdots,t_m$ are independent of each other.

    The random variable $T_m$ can be expressed as
    \[
        T_m = t_1 + t_2 + \cdots + t_m,
    \]
    and the tail distribution of $T_m$ can be written as
    \begin{eqnarray*}
        \mathrm{P}\left(
            T_m > \bar\tau
            \,\middle|\,
            |X|=n
        \right)
        &=&
        \sum_{\tau > \bar\tau}
        \sum_{\sum_{i=1}^m \tau_i = \tau}
        \mathrm{P}\left(
            t_1 = \tau_1, \cdots, t_m = \tau_m
            \,\middle|\,
            |X|=n
        \right) \\
        &=&
        \sum_{\tau > \bar\tau}
        \sum_{\sum_{i=1}^m \tau_i = \tau}
        n^{-\tau}
        \prod_{i=1}^m \left[n-(i-1)\right](i-1)^{\tau_i-1},
    \end{eqnarray*}
    where $\bar\tau$, $\tau$, and $\tau_i$ for $1 \le i \le m$ are positive integers
    and $\sum_{\sum_{i=1}^m \tau_i = \tau}$ means the summation with respect to all possible combinations of
    $\tau_1, \tau_2, \cdots, \tau_m$ such that $\sum_{i=1}^m \tau_i = \tau$
    \footnote{
        This condition implies the random variable $T_m$ equals to the integer $\tau$.
    }.

    We will prove the inequality
    \begin{equation} \label{eq:goal-ineq}
        \mathrm{P}\left(
            T_m > \bar\tau
            \,\middle|\,
            |X|=n
        \right)
        \le
        \mathrm{P}\left(
            T_m > \bar\tau
            \,\middle|\,
            |X|=m
        \right),
    \end{equation}
    under the condition
    \begin{equation}\label{eq:condition-inequality}
        m \le n,\quad 
        0 < \epsilon \le \frac{1}{\mathrm{e}},\quad
        \bar\tau \ge \left\lceil m \log \frac{m}{\epsilon} \right\rceil,
    \end{equation}
    so that the tail distribution of $T_m$ given $|X|=n$ is bounded from above
    by a probability that can be bounded using Lemma \ref{lemma:2}.
    To examine the dependence with respect to $n\ (\ge m)$ of the probability
    $\mathrm{P}\left(T_m > \bar\tau\,\middle|\,|X|=n\right)$,
    let us define functions $f:\mathbb{R}\to\mathbb{R}$
    and $g_\tau: \mathbb{R}\to\mathbb{R}$ as
    \begin{eqnarray*}
        f(x) &=& \prod_{i=1}^m \left[x-(i-1)\right], \\
        g_\tau(x) &=& x^{-\tau}f(x).
    \end{eqnarray*}
    If the function $g_\tau$ monotonically decreases with respect to $x\ (\ge m)$,
    Eq. (\ref{eq:goal-ineq}) is true as shown below.
    The first derivative of $g_\tau$ is
    \begin{eqnarray*} 
        g_\tau^\prime(x)
        &=&
        -\tau x^{-\tau-1} f(x)
        + x^{-\tau} \sum_{i=1}^m \frac{f(x)}{x-(i-1)} \\
        &=&
        x^{-\tau-1} f(x)
        \left[
            \sum_{i=1}^m \frac{x}{x-(i-1)} -\tau
        \right].
    \end{eqnarray*}
    Because $f(x) > 0$ for $x \ge m$,
    if the condition
    \[
        \tau \ge \sum_{i=1}^m \frac{x}{x-(i-1)}
    \]
    is satisfied for all $x \ge m$,
    the function $g_\tau$ decreases monotonically with respect to $x\ (\ge m)$.
    The right hand side of the above condition is
    bounded from above in the range $x \ge m$ as
    \begin{eqnarray*}
        \sum_{i=1}^m \frac{x}{x-(i-1)}
        &\le&
        \sum_{i=1}^m \frac{m}{m-(i-1)}\\
        &=&
        m \left( 1 + \sum_{k=2}^m \frac{1}{k} \right) \\
        &=&
        m \left( 1 + \int_1^m \frac{\mathrm{d}y}{\lceil y \rceil} \right)\\
        &\le&
        m \left( 1 + \int_1^m \frac{\mathrm{d}y}{y} \right) \\
        &=&
        m \log m + m \\
        &=&
        m \log \frac{m}{1/\mathrm{e}} \\
        &\le&
        m \log \frac{m}{\epsilon}\quad(\because \epsilon \le 1/\mathrm{e}).
    \end{eqnarray*}
    Thus, the condition
    \[
        m \le n,\quad
        0 < \epsilon \le \frac{1}{\mathrm{e}},\quad 
        \tau \ge \left\lceil m \log \frac{m}{\epsilon} \right\rceil
    \]
    is a sufficient condition for the monotonical decreasing of $g_\tau$.
    Under the condition (\ref{eq:condition-inequality}),
    the above sufficient condition for the monotonical decreasing of $g_\tau$
    is satisfied for any $\tau > \bar\tau$.
    Thus, we obtain the inequality
    \begin{eqnarray*}
        \mathrm{P}\left(
            T_m > \bar\tau
            \,\middle|\,
            |X|=n
        \right)
        &=&
        \sum_{\tau > \bar\tau}
        \sum_{\sum_{i=1}^m \tau_i = \tau}
        g_{\tau}(n) \prod_{i=1}^m (i-1)^{\tau_i-1} \\
        &\le&
        \sum_{\tau > \bar\tau}
        \sum_{\sum_{i=1}^m \tau_i = \tau}
        g_{\tau}(m) \prod_{i=1}^m (i-1)^{\tau_i-1} \\
        &=&
        \mathrm{P}\left(
            T_m > \bar\tau
            \,\middle|\,
            |X|=m
        \right),
    \end{eqnarray*}
    under the condition (\ref{eq:condition-inequality}).
    
    Substituting $\bar\tau = \lceil m\log(m/\epsilon) \rceil$,
    we get the inequality we want to prove:
    \[
        \mathrm{P}\left(
            T_m > \left\lceil m\log\frac{m}{\epsilon} \right\rceil
            \,\middle|\,
            |X|=n
        \right)
        \le
        \mathrm{P}\left(
            T_m > \left\lceil m\log\frac{m}{\epsilon} \right\rceil
            \,\middle|\,
            |X|=m
        \right)
        \le
        \epsilon,
    \]
    where the second inequality is due to Lemma \ref{lemma:2}.
\end{proof}

\section{Numerical tests and discussions}

\begin{figure}
    \begin{center}
        \includegraphics[width=17cm]{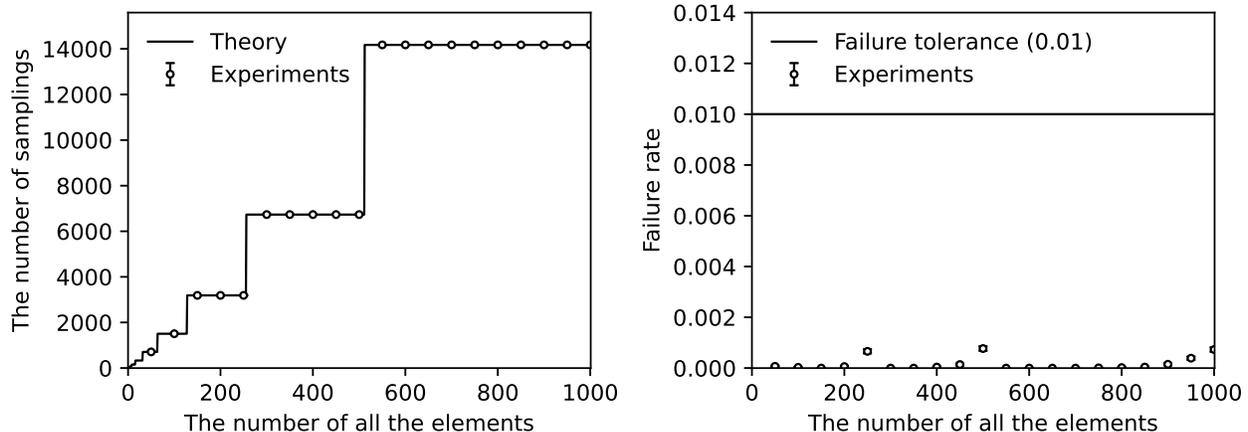}
    \end{center}
    \caption{
        Results of numerical tests for Algorithm \ref{algorithm}.
        The left panel shows the numbers of
        times of sampling in success cases.
        The circles with bars denote
        the sample means and standard deviations of the numerical results
        and the solid line denotes the theoretical result
        in Theorem \ref{theorem}.
        Note that, because the number of samplings in success cases exactly equals to
        $\lceil m_k\log(m_kM/\epsilon) \rceil$ in theory (Theorem \ref{theorem}),
        the standard deviations of the numerical results are 0 and
        the bars denoting the standard deviations are invisible in the left panel.
        The right panel shows the failure rates of the algorithm.
        The circles with bars denote
        the failure rates and their standard errors of the numerical results
        and the solid line denotes the failure tolerance $\epsilon$
        which is an input of the algorithm.
        Note that, because the number of test runs is large enough,
        the bars denoting the standard error are
        short and almost invisible in the right panel.
        In the numerical tests,
        the checkpoints were $\mathcal{C}=[2^1,2^2,\cdots,2^{10}]$,
        the failure tolerance $\epsilon$ was 0.01,
        the numbers of all the elements $|X|$ for test cases were
        $50, 100, 150, \cdots, 1000$, and
        the number of times of test runs for each case was $10^5$.
    }\label{fig:test-result}
\end{figure}

Results of numerical tests are shown in Fig. \ref{fig:test-result}.
In the numerical tests,
the checkpoints were $\mathcal{C}=[2^1,2^2,\cdots,2^{10}]$
(the same as those in the algorithm in Ref. \cite{Kumar2020}),
the failure tolerance $\epsilon$ was 0.01,
the numbers of all the elements $|X|$ for test cases were
$50, 100, 150, \cdots, 1000$, and
the number of times of test runs for each case was $10^5$.

The left panel shows the numbers of times of sampling
in success cases.
The numerical results (circle)
and the theoretical result (solid line) are
in good agreement with each other as expected.

The right panel shows the failure rates of the algorithm.
For all test cases,
the failure rates were less than the failure tolerance $\epsilon$,
which demonstrates the validity of Algorithm \ref{algorithm}.
Moreover,
the failure rates were much smaller than the failure tolerance and
for most cases they were almost zero.
This is because
the inequality in Lemma \ref{lemma:1} is not tight.
In the proof of Lemma \ref{lemma:1},
we proved the inequality
\[
    g_\tau(n) \le g_\tau(m).
\]
The tightness of the inequality is represented
by the ratio of $g_\tau(n)$ and $g_\tau(m)$,
\[
    \rho_\tau \equiv \frac{g_\tau(n)}{g_\tau(m)}
              = \binom{n}{m} \left(\frac{m}{n}\right)^\tau,
\]
which can be extremely smaller than 1
as shown in Fig. \ref{fig:tightness}.
This figure additionally shows that
the inequality is tight when $m \simeq n$.
This may be the reason why
the failure rates in some test cases where $m_k \simeq n$
are relatively large.

The looseness of the inequality implies that
we can derive more efficient algorithms for Problem \ref{problem}
based on a tighter inequality evaluation.
Further improvements of the efficiency (the number of times of sampling)
and the adaptation for non-uniform samplings of
the sampling-based enumeration algorithm
remains as future works.

\begin{figure}
    \begin{center}
        \includegraphics[width=8cm]{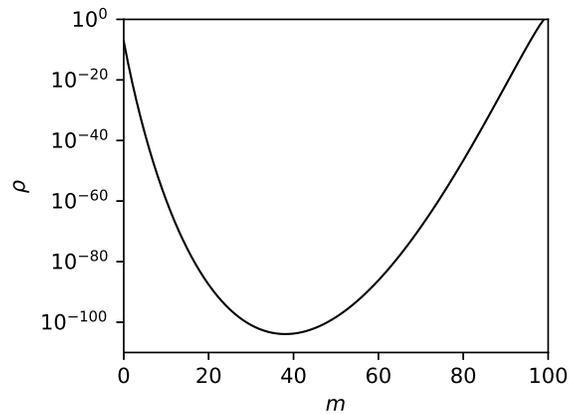}
    \end{center}
    \caption{
        The tightness of the inequality
        which is the basis of Algorithm \ref{algorithm}.
        The tightness is represented by $\rho_\tau$,
        that is, the ratio of $g_\tau(n)$ and $g_\tau(m)$.
        In this figure, $n=100$ and
        $\tau=\lceil m\log(m/\epsilon) \rceil$ with $\epsilon=0.01$.
        Note that the vertical axis is log-scaled.
    }\label{fig:tightness}
\end{figure}

\section*{Acknowledgements}
This work is supported by JST, PRESTO Grant Number JPMJPR2018.

\end{document}